\newtheorem{lemma}{Lemma}
\newtheorem{theorem}{Theorem}
\theoremstyle{definition}
\newtheorem{definition}{Definition}
\newtheorem{example}{Example}
\newcommand{\sjarp}{\sharp}
\newcommand{\sharpCQA}[1]{\mathsf{\sjarp CERTAINTY}({#1})}
\newcommand{\homosymbol}[1]{{\mathsf{enc}}_{#1}} 
\newcommand{\newhomosymbol}[1]{{\mathsf{enc}}^{\ast}_{#1}} 
\newcommand{\homo}[2]{\homosymbol{#2}({#1})} 
\newcommand{\newhomo}[2]{\newhomosymbol{#2}({#1})} 
\newcommand{\bfS}{\mathbf{S}} 
\newcommand{\db}{{\mathbf{db}}}
\newcommand{\onwaar}{{\mathbf{false}}}
\newcommand{\card}[1]{|{#1}|}
\newcommand{\skBCQ}{\mbox{\sf skBCQ}}
\newcommand{\cxBCQ}{\mbox{\sf cxBCQ}}
\newcommand{\signature}[2]{[{#1},{#2}]}
\newcommand{\keyequal}{\sim}
\newcommand{\rep}{\mathbf{r}}
\newcommand{\repairs}[1]{\mathsf{rset}({#1})}
\newcommand{\defeq}{\mathrel{\mathop:}=}
\newcommand{\queryvars}[1]{\mathsf{vars}({#1})}
\newcommand{\sequencevars}[1]{\mathsf{vars}({#1})}
\newcommand{\inverse}[1]{{#1}^{-1}}
\newcommand{\komtna}{\circ}
\newcommand{\formula}[1]{\left({#1}\right)}
\newcommand{\couple}[2]{\langle{#1},{#2}\rangle}
\title{Corrigendum to \emph{``Counting Database Repairs that Satisfy Conjunctive Queries with Self-Joins''}}
\author{Jef Wijsen}
\affil{University of Mons, Belgium}
\date{}
\begin{document}
\maketitle

\begin{abstract}
The helping Lemma~7 in~[Maslowski and Wijsen, ICDT, 2014] is false.
The lemma is used in (and only in) the proof of Theorem~3 of that same paper.
In this corrigendum, we provide a new proof for the latter theorem.
\end{abstract}

\section{The Flaw}

The helping Lemma~7 in~\cite{DBLP:conf/icdt/MaslowskiW14} is false.
A counterexample is given next.

\begin{example}\label{ex:flaw}
For $\bfS=\{R,S\}$ and $q=\{R(\underline{x},y)$, $S(\underline{y})\}$,
we have
$\homo{q}{\bfS}=\{N(\underline{R,x},y)$, $N(\underline{S,y},0)\}$.
From~\cite[Lemma~8]{DBLP:conf/icdt/MaslowskiW14}, it follows that
$\sharpCQA{\homo{q}{\bfS}}$ is $\sharp\P$-hard.
From~\cite[Theorem~4]{DBLP:journals/jcss/MaslowskiW13},
it follows that $\sharpCQA{q}$ is in $\FP$.
Consequently, assuming $\sharp\P\neq\FP$,
there exists no polynomial-time many-one reduction from  $\sharpCQA{\homo{q}{\bfS}}$ to  $\sharpCQA{q}$.
Lemma~7 in~\cite{DBLP:conf/icdt/MaslowskiW14} is thus false.
\qed
\end{example}

The first part in the proof of Lemma~7 in~\cite{DBLP:conf/icdt/MaslowskiW14} is correct; it shows a polynomial-time many-one reduction from $\sharpCQA{q}$ to $\sharpCQA{\homo{q}{\bfS}}$.
However, the second part in that proof is flawed when it claims \emph{``We can compute in polynomial time the (unique) database $\db_{0}'$ with schema $\bfS$ such that $\homo{\db_{0}'}{\bfS}=\db_{0}$.''}
The flaw is that the database $\db_{0}'$ does not generally exist, as shown next.
Let $\bfS=\{R,S\}$ and $q=\{R(\underline{x},y)$, $S(\underline{y})\}$, as in Example~\ref{ex:flaw}.
Then, $\homo{q}{\bfS}=\{N(\underline{R,x},y)$, $N(\underline{S,y},0)\}$.
A legal input to $\sharpCQA{\homo{q}{\bfS}}$ is $\db_{0}=\{N(\underline{R,b},c)$, $N(\underline{S,c},0)$, $N(\underline{S,c},1)\}$.
However, there exists no database $\db_{0}'$ such that $\homo{\db_{0}'}{\bfS}=\db_{0}$.
Indeed, for every database $\db_{0}'$ with schema $\bfS$, if $N(\underline{S,c},s)\in\homo{\db_{0}'}{\bfS}$, then $s=0$. 

\section{The Solution}

The following treatment is relative to a database schema~$\bfS$.
Let $k,m$ be non-negative integers such that every relation name in $\bfS$ has at most $k$ primary-key positions, and at most $m$ non-primary-key positions.
We define a new function $\newhomo{q}{\bfS}$ which encodes Boolean conjunctive queries~$q$ into unirelational Boolean conjunctive queries. 
For $\newhomo{q}{\bfS}$,
we use a fresh relation name $N$ with $k+1$ primary-key positions,
and $m$ non-primary-key positions.
For every atom $R(\underline{\vec{x}},\vec{y})$ in $q$,
the query $\newhomo{q}{\bfS}$ will contain some atom 
$N(\underline{R,\vec{x},\vec{0}},\vec{y},\vec{z})$,
where $\vec{0}$ is a sequence of padding zeros,
and $\vec{z}$ is a sequence of padding fresh variables, all distinct and not occurring elsewhere.
This encoding is different from~\cite[Definition~3]{DBLP:conf/icdt/MaslowskiW14} where a sequence of padding zeros was used instead of~$\vec{z}$.

\begin{example}\label{ex:padding}
We illustrate the difference between the old encoding $\homo{\cdot}{\bfS}$ of~\cite[Definition~3]{DBLP:conf/icdt/MaslowskiW14} and the newly proposed encoding $\newhomo{\cdot}{\bfS}$.
For $q_{0}=\{R(\underline{x},y)$, $S(\underline{y})\}$,
we have
\begin{eqnarray*}
\homo{q_{0}}{\bfS} & = & \{N(\underline{R,x},y), N(\underline{S,y},0)\},\\
\newhomo{q_{0}}{\bfS} & = & \{N(\underline{R,x},y), N(\underline{S,y},z)\}.
\end{eqnarray*}
We recall from~\cite[p.~156]{DBLP:conf/icdt/MaslowskiW14} that the \emph{complex part} of a Boolean conjunctive query contains every atom~$F\in q$ such that some non-primary-key position in $F$  contains either a variable with two or more occurrences in $q$ or a constant. 
Note that $N(\underline{S,y},0)$ belongs to the complex part of $\homo{q_{0}}{\bfS}$, while $N(\underline{S,y},z)$ is not in the complex part of $\newhomo{q_{0}}{\bfS}$.
\qed
\end{example}

\begin{definition}
We define $\skBCQ$ as the class of Boolean conjunctive queries in which all relation names are simple-key.
We say that a query $q\in\skBCQ$ is \emph{minimal} if both
\begin{itemize}
\item
$q$ contains no two distinct atoms $R_1(\underline{x_{1}},\vec{y}_{1})$, $R_2(\underline{{x}_{2}},\vec{y}_{2})$ such that $R_{1}=R_{2}$ and $x_{1}=x_{2}$; and
\item
there exists no substitution $\theta$ over $\queryvars{q}$ such that $\theta(q)\subsetneq q$.
\end{itemize} 
We define $\cxBCQ$ as the class of \emph{unirelational} Boolean conjunctive queries~$q$ whose relation name has signature $\signature{n}{2}$ (for some $n\geq 2$) such that for every $F\in q$,
the first position of $F$ is a constant.
\end{definition}


\begin{definition}
The \emph{intersection graph} of a Boolean conjunctive query is an undirected graph whose vertices  are the atoms of~$q$.
There is an undirected edge between any two atoms that have a variable in common.
\end{definition}

\begin{lemma}\label{lem:reuse}
Assume $\sharp\P\neq\FP$.
For every minimal query $q$ in $\skBCQ$,
if $\sharpCQA{\newhomo{q}{\bfS}}$ is $\sharp\P$-hard, then so is $\sharpCQA{q}$.
\end{lemma}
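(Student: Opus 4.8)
Given the $\sharp\P$-hardness of the former, such a reduction would immediately yield $\sharp\P$-hardness of the latter, which is exactly the contrapositive of what we want (under $\sharp\P\neq\FP$). The reason this should work, where the original Lemma~7 failed, lies entirely in the new padding scheme: the old encoding $\homo{\cdot}{\bfS}$ forced padding constants (zeros) into non-primary-key positions, so a legal input $\db_0$ to $\sharpCQA{\homo{q}{\bfS}}$ could place arbitrary values in those positions and thus fail to be the image of any $\bfS$-database. The new $\newhomo{\cdot}{\bfS}$ uses fresh padding \emph{variables} $\vec{z}$ instead, so those positions are unconstrained in the query; this is precisely what Example~\ref{ex:padding} emphasizes by noting the relevant atom leaves the complex part.

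First I would make precise the intended inverse decoding. Given a legal input $\db_0$ to $\sharpCQA{\newhomo{q}{\bfS}}$ over the fresh relation name $N$ with signature determined by $k,m$, I would define a decoding map that reads off, from each $N$-fact, the relation name $R$ it encodes (stored in the first position as a constant), strips the sequence of padding zeros $\vec{0}$ from the primary-key block and the padding values from the trailing positions, and reconstructs an $\bfS$-fact $R(\underline{\vec{x}},\vec{y})$. The key point is that every $N$-fact whose primary-key block is consistent with the pattern $\underline{R,\vec{x},\vec{0}}$ decodes cleanly, and---crucially---because the padding positions $\vec{z}$ are now fresh variables rather than fixed zeros, \emph{every} value appearing there is admissible, so no legal input is ``orphaned'' the way $\db_0$ was in the flawed argument. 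I would argue this decoding is computable in polynomial time and that it induces a bijection between the repairs of $\db_0$ (with respect to the key constraints on $N$) and the repairs of the decoded $\bfS$-database, such that a repair satisfies $\newhomo{q}{\bfS}$ iff its image satisfies $q$.

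The main technical content is verifying that the decoding respects key-equality ($\keyequal$) and hence that repair-sets correspond. Since $N$ has $k+1$ primary-key positions and the first holds the constant $R$ while the next $k$ hold the original key $\vec{x}$ (padded by $\vec{0}$), two $N$-facts are key-equal exactly when they encode facts of the same relation $R$ with the same key, so the key-equivalence classes (blocks) of $\db_0$ map bijectively onto the blocks of the decoded database. I expect the reduction to require two preprocessing normalizations: discarding $N$-facts whose padding-zero positions are not actually zero (these cannot arise from any $\bfS$-fact but might appear in a raw legal input, so I must confirm they either cannot matter to query satisfaction or can be pruned without changing the count), and handling the minimality hypothesis on $q$ to ensure the atom-to-atom correspondence between $q$ and $\newhomo{q}{\bfS}$ is one-to-one.

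\textbf{The hardest step} will be showing that the satisfaction of $\newhomo{q}{\bfS}$ by a repair of $\db_0$ corresponds \emph{exactly} to satisfaction of $q$ by the decoded repair, including the role of the fresh variables $\vec{z}$. Because $\vec{z}$ are existentially quantified and occur nowhere else, a homomorphism witnessing $\newhomo{q}{\bfS}$ can assign them freely, which must be shown to match the freedom in the $\bfS$-side; conversely, I must ensure that the join structure of $q$---encoded through shared variables in the primary-key and genuine non-key positions---is faithfully transported by the encoding, so that no spurious homomorphisms are created or destroyed. This is exactly where the minimality of $q$ and the careful separation between padding positions and genuine positions must be invoked to rule out unintended identifications among atoms.
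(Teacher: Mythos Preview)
Your plan attempts a direct many-one reduction from $\sharpCQA{\newhomo{q}{\bfS}}$ to $\sharpCQA{q}$ by ``decoding'' each $N$-fact back to an $\bfS$-fact. This is \emph{not} what the paper does, and it has a genuine gap. The new padding with fresh variables $\vec{z}$ does ensure that any value in those non-key padding positions is \emph{admissible for query satisfaction}, but it does \emph{not} make the decoding injective on facts. Concretely, take $\bfS=\{R,S\}$ with $S$ of signature $[1,0]$ and $R$ of signature $[1,1]$, so $\newhomo{q}{\bfS}$ contains $N(\underline{S,y},z)$ with $z$ fresh. The database $\db_{0}=\{N(\underline{S,c},0),\,N(\underline{S,c},1)\}$ is a legal input: these two facts are key-equal, giving two repairs, each satisfying $N(\underline{S,y},z)$. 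Your decoding strips the padding value and sends both facts to the single $\bfS$-fact $S(\underline{c})$, whose unique repair is counted once. So the claimed bijection between $\repairs{\db_{0}}$ and the repairs of the decoded database fails, and the counts do not match. You correctly flagged the key-side padding zeros as a preprocessing issue, but the non-key padding is the real obstruction, and it cannot be pruned: those facts \emph{do} participate in satisfaction (precisely because $\vec{z}$ is unconstrained), yet they collapse under decoding.

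The paper avoids decoding altogether. It argues that since $\sharpCQA{\newhomo{q}{\bfS}}$ is $\sharp\P$-hard, the procedure \textsf{IsEasy} from~\cite{DBLP:conf/icdt/MaslowskiW14} returns $\onwaar$ on $\newhomo{q}{\bfS}$; tracing that execution exposes two connected complex-part atoms with distinct key variables. Because non-key padding uses fresh variables (this is exactly where the new encoding matters), the same structural witness transfers to a self-join-free variant $q'$ obtained from $q$ by renaming repeated relation symbols, forcing \textsf{IsSafe} to return $\onwaar$ on $q'$ and hence making $\sharpCQA{q'}$ $\sharp\P$-hard. The final step is a many-one reduction from $\sharpCQA{q'}$ to $\sharpCQA{q}$ via the pairing map $R_{i}(a_{1},\dots,a_{n})\mapsto R(\couple{a_{1}}{x_{1}},\dots,\couple{a_{n}}{x_{n}})$; minimality of $q$ is used precisely here, to show that the ``right-coordinate'' substitution $\theta_{R}$ extracted from any satisfying valuation is an automorphism of $q$, so its inverse can be composed with the ``left-coordinate'' valuation to recover a witness for $q'$. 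Thus the paper's route is structural-then-reductive, not a direct inversion of $\newhomosymbol{\bfS}$.
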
 
\begin{proof}
Let $q$ be a minimal query in $\skBCQ$ such that $\sharpCQA{\newhomo{q}{\bfS}}$ is $\sharp\P$-hard.
Note that $q$ does not need to be unirelational or self-join-free.
The query $\newhomo{q}{\bfS}$, which is unirelational, is a legal input to the function IsEasy of~\cite[p.~163]{DBLP:conf/icdt/MaslowskiW14}.\footnote[2]{For uniformity of notation, we will assume that the unirelational query uses relation name~$N$.}
Since $\sharpCQA{\newhomo{q}{\bfS}}$ is $\sharp\P$-hard, the function IsEasy will return $\onwaar$ on input $\newhomo{q}{\bfS}$. 
This function will repeat, as long as possible, the following step: pick some atom $N(\underline{R,c},\vec{y})$ and some variable $y\in\sequencevars{\vec{y}}$, with $R$ some relation name (treated as a constant) and $c$ some constant, and replace all occurrences of~$y$ with an arbitrary constant.
Let $\bar{q}$ be the query that results from these steps.
Clearly, for every atom $N(\underline{R,s},\vec{t})$ in~$\bar{q}$,
either $s$ is a constant or $\vec{t}$ is variable-free.
Since IsEasy returns $\onwaar$ on input $\bar{q}$, it follows that~$\bar{q}$ does not satisfy the premise of~\cite[Lemma~5]{DBLP:conf/icdt/MaslowskiW14}.
Therefore, it must be the case that $\bar{q}$ contains two distinct atoms $N(\underline{R,x},\vec{u})$ and
$N(\underline{S,y},\vec{w})$
that are connected in the intersection graph of $\bar{q}$ such that 
\begin{itemize}
\item
$R$ and $S$ are relation names (serving as constants), not necessarily distinct;
\item
$x$ and $y$ are distinct variables; and
\item
neither $\vec{u}$ nor $\vec{w}$ is exclusively composed of variables occurring only once in the query.
That is, $N(\underline{R,x},\vec{u})$ and
$N(\underline{S,y},\vec{w})$ belong to the complex part of~$\bar{q}$.
\end{itemize}
For every relation name $R$ that appears in $q$,
we assume fresh relation names $R_1, R_2, R_3,\ldots$ with the same signature  as $R$.
Using these relation names, we can construct a self-join-free Boolean conjunctive query $q'$ such that $\card{q'}=\card{q}$ and for every atom $R(\underline{x},\vec{y})$ in $q$,
the query $q$ contains some atom $R_{i}(\underline{x},\vec{y})$. 
For example, if $q=\{R(\underline{x},y)$, $R(\underline{y},z)$, $S(\underline{z},x)\}$,
then we can let 
$q'=\{R_1(\underline{x},y)$, $R_2(\underline{y},z)$, $S_1(\underline{z},x)\}$.
It can now be shown that the function IsSafe in~\cite[p.~158]{DBLP:conf/icdt/MaslowskiW14} will return $\onwaar$ on input $q'$, and thus $\sharpCQA{q'}$ is $\sharp\P$-hard.
Indeed, whenever IsEasy picked $N(\underline{R,c},\vec{y})$ and some variable $y\in\sequencevars{\vec{y}}\cap\queryvars{q}$,
the function IsSafe can execute SE3 on the corresponding $R_{i}$-atom of $q'$.
This eventually leads to a query whose complex part contains two atoms $R_{i}(\underline{x},\vec{u}')$ and  $S_{j}(\underline{y},\vec{w}')$, $x\neq y$, that are connected in the intersection graph,
at which point IsSafe will return $\onwaar$.
In this reasoning, one needs that non-primary-key positions are padded with fresh variables occurring only once, as can be seen from Example~\ref{ex:padding}.

In the remainder of this proof, we show the existence of a polynomial-time many-one reduction from $\sharpCQA{q'}$ to $\sharpCQA{q}$. 
We incidentally note that the remaining reasoning, which generalizes the proof of~\cite[Lemma~2]{DBLP:conf/icdt/MaslowskiW14}, does not require that relation names are simple-key.

Let $f$ be a mapping from facts to facts such that 
for every atom $R_i(x_1,\dots,x_n)\in q'$,
for every $R_i$-fact $A\defeq R_i(a_1,\dots,a_n)$,
$f(A)\defeq R(\couple{a_1}{x_1},\dots,\couple{a_n}{x_n})$.
Notice that $f$ maps $R_i$-facts to $R$-facts.
Here, every couple $\couple{a_i}{x_i}$ denotes a constant such that $\couple{a_i}{x_i}=\couple{a_j}{x_j}$ if and only if both $a_i=a_j$ and $x_i=x_j$.
Moreover, if $c$ is a constant, then $\couple{c}{c}\defeq c$.
Since no two distinct atoms of~$q$ agree on both their relation name and primary key,
it will be the case that for all facts $A$ and $B$,
$A\keyequal B$ if and only if $f(A)\keyequal f(B)$,
where $\keyequal$ denotes ``is key-equal-to.''

We extend the function $f$ in the natural way to databases $\db$ that use only relation names from~$q'$:
$f(\db)\defeq\{f(A)\mid A\in\db\}$.
Clearly, $f(\db)$ can be computed in polynomial time in the size of $\db$.
Let $\db$ be a set of facts with relation names in $q'$.
It can be easily seen that $\card{\repairs{\db}}=\card{\repairs{f(\db)}}$ and $\repairs{f(\db)}=\{f(\rep)\mid\rep\in \repairs{\db}\}$.
Let $\rep$ be an arbitrary repair of $\db$.
It suffices to show that $$\rep\models q'\iff f(\rep)\models q.$$

For the implication $\implies$, assume that $\rep\models q'$.
We can assume a valuation $\theta$ over $\queryvars{q'}$ such that $\theta(q')\subseteq\rep$.
Let $\mu$ be the valuation such that for every variable $x\in\queryvars{q'}$, $\mu(x)=\couple{\theta(x)}{x}$. 
By our construction of $q'$ and $f$, it will be the case that $\mu(q)\subseteq f(\rep)$, thus $f(\rep)\models q$.

For the implication $\impliedby$, assume that $f(\rep)\models q$.
We can assume a valuation~$\theta$ over $\queryvars{q}$ such that $\theta(q)\subseteq f(\rep)$.
Notice that if $c$ is a constant in $q$,
then it must be the case that $\theta(c)=\couple{c}{c}\defeq c$.
We define $\theta_L$ as the substitution that maps every variable $x$ in $\queryvars{q}$ to the first coordinate of $\theta(x)$;
and $\theta_R$ maps every $x$ to the second coordinate of $\theta(x)$.
It is convenient to think of $L$ and $R$ as references to the Left and the Right coordinates, respectively.
Thus, by definition, $\theta(x)=\couple{\theta_L(x)}{\theta_R(x)}$.

By inspecting the right-hand coordinates of couples $\couple{a_{i}}{x_{i}}$ in $f(\rep)$, it can be easily seen that $\theta(q)\subseteq f(\rep)$ implies $\theta_R(q)\subseteq q$.
Since the query $q$ is minimal,
it follows that $\theta_R(q)=q$, i.e., $\theta_R$ is an automorphism.
Since the inverse of an automorphism is an automorphism,
$\inverse{\theta_R}$ is an automorphism as well.
Note that $\theta_R$ will be the identity on constants that appear in $q$.
We now define $\mu\defeq\theta_L\komtna\inverse{\theta_R}$ (i.e.,
$\mu$ is the composed function $\theta_L$ after the inverse of $\theta_R$),
and show that $\mu(q')\subseteq\rep$, which implies the desired result that $\rep\models q'$.
To this extent, let $R_i(x_1,\dots,x_n)$ be an arbitrary atom of $q'$.
It suffices to show $R_i(\mu(x_1),\dots,\mu(x_n))\in\rep$,
which can be proved as follows.
From  $R_i(x_1,\dots,x_n)\in q'$,
it follows  $R(x_1,\dots,x_n)\in q$.
Thus, since $\inverse{\theta_R}$ is an automorphism,
$$R\formula{\enskip
\inverse{\theta_R}(x_1),\enskip
\dots,\enskip
\inverse{\theta_R}(x_n)
\enskip}\in q.$$
Since $\theta(q)\subseteq f(\rep)$,
$$R\formula{\enskip
\theta\formula{\inverse{\theta_R}(x_1)},\enskip
\dots,\enskip
\theta\formula{\inverse{\theta_R}(x_n)}
\enskip}\in f(\rep).$$
Since, for every symbol~$s$, $\theta(s)=\couple{\theta_L(s)}{\theta_R(s)}$ and $\theta_R\formula{\inverse{\theta_R}(s)}=s$,
we obtain
$$R\formula{\enskip
\couple{\theta_L(\inverse{\theta_R}(x_1))}{x_1},\enskip
\dots,\enskip
\couple{\theta_L(\inverse{\theta_R}(x_n))}{x_n}
\enskip}\in f(\rep).$$
That is, by our definition of $\mu$,
$$R\formula{\enskip
\couple{\mu(x_1)}{x_1},\enskip
\dots,\enskip
\couple{\mu(x_n)}{x_n}
\enskip}\in f(\rep).$$
From this, it is correct to conclude that
$R_i(\mu(x_1),\dots,\mu(x_n))\in\rep$.
This concludes the proof.
\end{proof}

\begin{lemma}\label{lem:forward}
For every Boolean conjunctive query~$q$,
there exists a polynomial-time many-one reduction from $\sharpCQA{q}$ to $\sharpCQA{\newhomo{q}{\bfS}}$.
\end{lemma}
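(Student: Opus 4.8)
The plan is to lift the query-level encoding $\newhomosymbol{\bfS}$ to a database-level encoding and to show that it induces a count-preserving reduction. First I would define, for every fact $R(\vec{a},\vec{b})$ of a database with schema $\bfS$, its image $N(R,\vec{a},\vec{0},\vec{b},\vec{0})$, where the first block of zeros pads the primary key to length $k+1$ and the second block pads the non-primary-key part to length $m$; extending this map by union yields $\newhomo{\db}{\bfS}$, which is clearly computable in time polynomial in the size of $\db$. Since distinct relation names produce distinct first coordinates, and distinct facts of the same relation~$R$ differ after the (fixed) padding, this map is injective on facts. Moreover, two facts are key-equal precisely when their images are, because the $N$-primary-key $\langle R,\vec{a},\vec{0}\rangle$ records exactly the relation name and primary key of the original fact.

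From this it follows that the key-equivalence classes of $\db$ are in bijection with those of $\newhomo{\db}{\bfS}$, and that within corresponding classes the facts are in bijection. Hence $\rep\mapsto\newhomo{\rep}{\bfS}$ is a bijection from $\repairs{\db}$ onto $\repairs{\newhomo{\db}{\bfS}}$; that is, $\repairs{\newhomo{\db}{\bfS}}=\{\newhomo{\rep}{\bfS}\mid\rep\in\repairs{\db}\}$ and $\card{\repairs{\db}}=\card{\repairs{\newhomo{\db}{\bfS}}}$. It therefore suffices to prove, for every repair $\rep$ of $\db$, the satisfaction equivalence $\rep\models q\iff\newhomo{\rep}{\bfS}\models\newhomo{q}{\bfS}$; summing over all repairs then yields the desired count-preserving many-one reduction.

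For the direction $\Rightarrow$, I would take a valuation $\theta$ with $\theta(q)\subseteq\rep$ and extend it to $\theta^{\ast}$ by sending every fresh padding variable occurring in some $\vec{z}$ to the constant $0$. Because $\theta^{\ast}$ agrees with $\theta$ on all variables shared among atoms, each encoded atom $N(\underline{R,\vec{x},\vec{0}},\vec{y},\vec{z})$ is mapped to the image of the corresponding fact $R(\theta(\vec{x}),\theta(\vec{y}))\in\rep$, so $\newhomo{\rep}{\bfS}\models\newhomo{q}{\bfS}$. For the direction $\Leftarrow$, I would take a valuation $\theta^{\ast}$ with $\theta^{\ast}(\newhomo{q}{\bfS})\subseteq\newhomo{\rep}{\bfS}$ and restrict it to $\queryvars{q}$. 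The crucial observations are that the constant~$R$ in the leading key position forces each encoded atom to be matched by the image of an actual $R$-fact, that the zero padding in the primary key is matched by the fixed zeros, and that the fresh variables $\vec{z}$, occurring only once, are merely forced onto the harmless non-key padding and so impose no further constraint. The restricted valuation then witnesses $\rep\models q$.

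The main obstacle is the $\Leftarrow$ direction: one must rule out that a satisfying valuation of $\newhomo{q}{\bfS}$ ``cheats'' by matching an encoded atom against the image of a fact with a different relation name or a spurious padding pattern. This is exactly where the constant~$R$ in the first key position and the single-occurrence fresh variables $\vec{z}$ are needed, ensuring that the variable-sharing structure of $q$ is faithfully mirrored in $\newhomo{q}{\bfS}$ and back. The remaining work (polynomial-time computability of the encoding and the repair bijection) is routine bookkeeping.
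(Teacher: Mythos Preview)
Your proposal is correct and follows essentially the same approach as the paper: define the fact-level encoding $R(\underline{\vec{a}},\vec{b})\mapsto N(\underline{R,\vec{a},\vec{0}},\vec{b},\vec{0})$, observe that it preserves and reflects key-equality so that repairs correspond bijectively, and then prove the satisfaction equivalence by extending a valuation with $0$'s on the fresh padding variables (for $\Rightarrow$) and restricting it (for $\Leftarrow$). Your discussion of why the $\Leftarrow$ direction goes through---the constant relation name in the first key position pinning down the source relation, and the single-occurrence padding variables imposing no constraint---is more explicit than the paper's one-line ``straightforward,'' but the argument is the same.
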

\begin{proof}
Let $q$ be a Boolean conjunctive query.
Let $R$ be a relation name that occurs in~$q$.
Let $\{R(\underline{\vec{x}_{i}},\vec{y}_{i})\}_{i=1}^{m}$ be the set of $R$-atoms of~$q$. 
Then, $\newhomo{q}{\bfS}$ will contain, for every $i\in\{1,\dots,m\}$, some atom 
$N(\underline{R,\vec{x}_{i},\vec{0}},\vec{y}_{i},\vec{z}_{i})$,
where $\vec{z}_{i}$ is a (possibly empty) sequence of distinct fresh variables not occurring elsewhere.
For every $R$-fact $A\defeq R(\underline{\vec{a}},\vec{b})$, we define $f(A)\defeq N(\underline{R,\vec{a},\vec{0}},\vec{b},\vec{0})$.
Note here that $f(A)$ depends on the signatures of~$R$ and~$N$, but not on the $R$-atoms of $q$.
The mapping~$f$ is defined similarly for all relation names that appear in~$q$.
It can be easily seen that for all facts $A$ and $B$ whose relation names appear in~$q$,
$A\keyequal B$ if and only if $f(A)\keyequal f(B)$.

If $\db$ is an instance of $\sharpCQA{q}$,
we can assume without loss of generality that every relation name in~$\db$ also appears in~$q$.
We extend the function $f$ to such instances $\db$ of $\sharpCQA{q}$:
$f(\db)\defeq\{f(A)\mid A\in\db\}$.
Obviously, $f(\db)$ can be computed in polynomial time in the size of~$\db$.
It is also obvious that $\card{\repairs{\db}}=\card{\repairs{f(\db}}$ and $\repairs{f(\db)}=\{f(\rep)\mid\rep\in\repairs{\db}\}$.
It suffices to show that for every repair $\rep$ of $\db$,
$$\rep\models q\iff f(\rep)\models\newhomo{q}{\bfS}.$$

For the implication $\implies$, assume $\rep\models q$.
We can assume a valuation $\theta$ over $\queryvars{q}$ such that $\theta(q)\subseteq\rep$.
Let $\theta'$ be the valuation that extends $\theta$ from $\queryvars{q}$ to $\queryvars{\newhomo{q}{\bfS}}$ such that $\theta'(z)=0$ for every variable $z$ that appears in $\newhomo{q}{\bfS}$ but not in $q$.
By the construction of $f$, it will be the case that $\theta'(\newhomo{q}{\bfS})\subseteq f(\rep)$.
Indeed, if $\newhomo{q}{\bfS}$ contains $N(\underline{R,\vec{x}_{i},\vec{0}},\vec{y}_{i},\vec{z}_{i})$,
then $\rep$ will contain $R(\underline{\theta(\vec{x}_{i})},\theta(\vec{y}_{i}))$,
hence $f(\rep)$ will contain $N(\underline{R,\theta'(\vec{x}_{i}),\vec{0}},\theta'(\vec{y}_{i}),\vec{0})$ and $\theta'(\vec{z}_{i})=\vec{0}$.

For the implication $\impliedby$, assume $f(\rep)\models\newhomo{q}{\bfS}$.
We can assume a valuation $\theta$ over $\queryvars{\newhomo{q}{\bfS}}$ such that $\theta(\newhomo{q}{\bfS})\subseteq f(\rep)$.
It is straightforward to see that $\theta(q)\subseteq\rep$.
\end{proof}

We now give the new proof for Theorem~3 in~\cite{DBLP:conf/icdt/MaslowskiW14}.

\begin{theorem}[\mbox{\cite[Theorem~3]{DBLP:conf/icdt/MaslowskiW14}}]
The set $\{\sharpCQA{q}\mid q\in\skBCQ\}$ exhibits an effective $\FP$-$\sharp\P$-dichotomy.
\end{theorem}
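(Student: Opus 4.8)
The plan is to classify $\sharpCQA{q}$ for an arbitrary $q\in\skBCQ$ by transporting the question to the unirelational query $\newhomo{q}{\bfS}$, for which the function IsEasy of~\cite{DBLP:conf/icdt/MaslowskiW14} already yields an effective dichotomy. Lemmas~\ref{lem:forward} and~\ref{lem:reuse} supply the two directions of this transport: the forward reduction lets membership in $\FP$ descend from the encoding to $q$, while the backward implication lets $\sharp\P$-hardness ascend from the encoding to $q$, the latter only when $q$ is minimal. The whole argument therefore comes down to (i) passing to a minimal query without altering the counting function, and (ii) invoking the unirelational dichotomy, which is legitimate precisely because the flawed Lemma~7 is, by the abstract, used nowhere except in the original proof of this very theorem, and in particular not in the correctness argument for IsEasy.

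First I would reduce to the minimal case. Given $q\in\skBCQ$, I would compute a minimal $q^{\ast}\in\skBCQ$ with $\sharpCQA{q}=\sharpCQA{q^{\ast}}$, obtained by merging any two atoms that agree on relation name and primary-key value, and then folding the query until no substitution $\theta$ with $\theta(q)\subsetneq q$ remains. Both operations keep the query inside $\skBCQ$, since signatures are untouched; they preserve the set of relation names, since homomorphisms map $R$-atoms to $R$-atoms; and they are computable. Crucially, they also preserve $\rep\models q$ for every repair~$\rep$: the folding step preserves logical equivalence outright, and the merging step is sound because every repair is key-consistent and hence forces two key-equal atoms to agree on all positions. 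Consequently $\sharpCQA{\cdot}$ is unchanged on every input, and the two problems share the same input domain.

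Next I would run IsEasy on $\newhomo{q^{\ast}}{\bfS}$, a legal input as noted in the proof of Lemma~\ref{lem:reuse}. By the correctness of IsEasy, exactly one of two cases arises. If IsEasy returns $\waar$, then $\sharpCQA{\newhomo{q^{\ast}}{\bfS}}\in\FP$, and composing with the reduction of Lemma~\ref{lem:forward} (applied to $q^{\ast}$) gives $\sharpCQA{q}=\sharpCQA{q^{\ast}}\in\FP$. If IsEasy returns $\onwaar$, then $\sharpCQA{\newhomo{q^{\ast}}{\bfS}}$ is $\sharp\P$-hard, and since $q^{\ast}$ is minimal, Lemma~\ref{lem:reuse} yields that $\sharpCQA{q}=\sharpCQA{q^{\ast}}$ is $\sharp\P$-hard. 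This establishes the dichotomy; it is effective because minimization, the encoding $\newhomo{\cdot}{\bfS}$, and IsEasy are all computable, so running IsEasy on $\newhomo{q^{\ast}}{\bfS}$ decides on which side of the dichotomy $q$ lies.

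The step I expect to be the main obstacle is the soundness of the minimization in~(i), specifically the merging of key-equal atoms. This operation is valid only because we quantify over repairs rather than over arbitrary databases, so one must verify that passing from $q$ to $q^{\ast}$ preserves $\rep\models q$ for every consistent~$\rep$ even though it may fail over inconsistent instances. Once this is settled, the rest is essentially bookkeeping: Lemmas~\ref{lem:forward} and~\ref{lem:reuse} already carry out the real work of relating $q$ to its encoding, and the unirelational dichotomy is inherited unchanged from~\cite{DBLP:conf/icdt/MaslowskiW14} once Lemma~7 is set aside.
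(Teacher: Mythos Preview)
Your plan coincides with the paper's: pass to a minimal query, encode via $\newhomo{\cdot}{\bfS}$, and then let the unirelational dichotomy together with Lemmas~\ref{lem:forward} and~\ref{lem:reuse} decide the two sides. There is, however, one case your minimization step does not cover, and the paper handles it explicitly before doing anything else.

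Your merging rule acts on two atoms $R(\underline{x},\vec{y}_{1})$ and $R(\underline{x},\vec{y}_{2})$ that share relation name and primary key, and you justify it by observing that any consistent $\rep$ forces $\theta(\vec{y}_{1})=\theta(\vec{y}_{2})$. That observation only licenses a merge when $\vec{y}_{1}$ and $\vec{y}_{2}$ are unifiable; when they are not---take $q=\{R(\underline{x},0),R(\underline{x},1)\}$, the paper's own example---there is no minimal $q^{\ast}\in\skBCQ$ with $\sharpCQA{q}=\sharpCQA{q^{\ast}}$, because no repair satisfies $q$ and the definition of ``minimal'' forbids two key-equal atoms. Your procedure is simply undefined here. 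The paper disposes of this case first: decide whether $q$ is satisfiable by some consistent database; if not, $\sharpCQA{q}$ is identically~$0$, hence trivially in $\FP$; only under the assumption of satisfiability does one compute the minimal $q_{m}$. With this preliminary check added, your argument is complete and matches the paper's proof.
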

\begin{proof}[New proof]
Let $q\in\skBCQ$.
It can be decided whether $q$ can be satisfied by a consistent database.
If~$q$ cannot be satisfied by a consistent database, then for every database $\db$, the number of repairs of $\db$ satisfying~$q$ is~$0$.
An example is $q=\{R(\underline{x},0)$, $R(\underline{x},1)\}$.
Assume next that $q$ can be satisfied by a consistent database.
Then, we can compute a minimal query~$q_{m}$ such that for every database, the number of repairs satisfying $q_{m}$ is equal to the number of repairs satisfying $q$.
That is, the problems $\sharpCQA{q_{m}}$ and $\sharpCQA{q}$ are identical.

Then, $\newhomo{q_{m}}{\bfS}$ belongs to $\cxBCQ$.
By~\cite[Lemma8]{DBLP:conf/icdt/MaslowskiW14}, the set 
$\{\sharpCQA{q}\mid q\in\cxBCQ\}$ exhibits an effective $\FP$-$\sharp\P$-hard dichotomy.
If the problem $\sharpCQA{\newhomo{q_{m}}{\bfS}}$ is in $\FP$,
then $\sharpCQA{q}$ is in $\FP$ by Lemma~\ref{lem:forward};
and if $\sharpCQA{\newhomo{q_{m}}{\bfS}}$ is $\sharp\P$-hard,
then $\sharpCQA{q}$ is $\sharp\P$-hard by Lemma~\ref{lem:reuse}.
Consequently, $\sharpCQA{q}$ is in $\FP$ or $\sharp\P$-hard,
and it is is decidable which of the two cases applies.
\end{proof}

\end{document}